\def\b{\mathbb}
\def\c{\mathcal}
\def\sinr{\textrm{SINR}}
\def\snr{\textrm{SNR}}
\newtheoremstyle{slplain}
  {3pt}
  {3pt}
  {\slshape}
  {}
  {\bfseries}
  {.}%
  { }
  {}
\theoremstyle{slplain}
\newtheorem{thm}{Theorem}
\newtheorem{cor}{Corollary}
\begin{document}


\title{Dynamic Spectrum Refarming of GSM Spectrum for LTE Small Cells}

\author{ Xingqin Lin and Harish Viswanathan
\thanks{Xingqin Lin is with Department of Electrical \& Computer Engineering at the University of Texas, Austin, TX. Harish Viswanathan is with Bell Labs, Alcatel-Lucent, Murray Hill, NJ. This work was performed when Xingqin Lin was a summer intern at Bell Labs. Email: xlin@utexas.edu, harish.viswanathan@alcatel-lucent.com. Date revised: \today}
}

\maketitle

\begin{abstract}
In this paper we propose a novel solution called dynamic spectrum refarming (DSR) for deploying LTE small cells using the same spectrum as existing GSM networks. The basic idea of DSR is that LTE small cells are deployed in the GSM spectrum but suppress transmission of all signals including the reference signals in some specific physical resource blocks corresponding to a portion of the GSM carriers to ensure full GSM coverage. Our study shows that the proposed solution can provide LTE mobile terminals with  high speed data services when they are in the coverage of the LTE small cells while minimally affecting the service provided to GSM terminals located within the LTE small cell coverage area. Thus, the proposal allows the normal operation of the existing GSM networks even with LTE small cells deployed in that spectrum.  Though the focus of this paper is about GSM spectrum refarming, an analogous approach can be applied to reuse code division multiple access (CDMA) spectrum for LTE small cells.
\end{abstract}

\IEEEpeerreviewmaketitle


\section{Introduction}

Our world is witnessing an inexorable and exponential wireless traffic growth \cite{cisco2011cisco}. Small cells, involving isolated or clustered deployment of low power base stations, is emerging as an effective approach to address the high capacity demand in traffic hot spots.
Long Term Evolution (LTE) is the key standard for high speed data services in cellular networks \cite{website:3gppLTE}. The deployment of LTE small cells provides a fast and inexpensive way to meet the intense user demand for mobile traffic \cite{Damnjanovic2011survey, Ghosh2012Heterogeneous}. Deploying LTE small cells in the same spectrum requires careful handling of the interference between the macrocells and small cells while deploying it in separate spectrum requires additional dedicated spectrum for small cells. Refarming of the spectrum used by older technologies such as Global System for Mobile (GSM) \cite{mehrotra1997gsm}, the demand for which is diminishing, offers the possibility to make additional spectrum available for use by LTE.

In this paper we propose to deploy LTE small cells in areas of high demand in existing GSM spectrum. Deploying LTE small cells in GSM spectrum allows reusing highly valuable spectrum inefficiently used by an old technology for a highly spectrally efficient solution while still maintaining the old technology for legacy devices. However, a direct deployment of LTE small cells in existing GSM networks may cause degradation of legacy communications. In particular, since the LTE small cells use the same spectrum as the GSM macro cells, the legacy GSM devices would experience severe LTE interference when they are located in the coverage of the small cells.

Frequency partitioning between GSM and LTE is one potential approach to deal with the cross-tier interference mentioned above. Specifically, we may configure the LTE small cells such that the LTE transmission bandwidth only occupies part of the available spectrum. The GSM devices located within the coverage of the small cells can avoid the severe LTE interference by using the channels not used by the small cells. Though conceptually simple, this may involve complex frequency planning. More importantly, this approach disallows the operators to refarm completely GSM spectrum for LTE deployment and is also not efficient (see \cite{lin2013dynamic} for more detailed discussion).

Given the limitation of frequency partitioning, we propose a novel approach that we term dynamic spectrum refarming (DSR) \cite{lin2013dynamic} for the deployment of LTE small cells in existing GSM networks. Specifically, we allow LTE small cells to be configured such that they  completely refarm the GSM spectrum, subject to allowed LTE system bandwidths. At the same time, some specific physical resource blocks (PRB) corresponding to a portion of the GSM carriers will be reserved for GSM transmission, i.e., LTE Evolved Node B (eNodeB) will not schedule those reserved PRBs for any User Equipment (UE) and accordingly suppress the reference signals.  With this approach, the GSM devices located within the coverage of the small cells can be protected from the LTE interference since these devices can now choose GSM channels in the portion of the spectrum where LTE small cell signal is suppressed. So DSR prevents the GSM coverage holes induced by LTE small cells, while it allows a complete refarming of GSM spectrum for LTE deployment. UEs can be covered by GSM networks while experiencing high speed data services provided by LTE connections when they are in the coverage of small cells. DSR thus provides a flexible approach for the co-existence of LTE small cells with GSM networks.

However, there are several technical issues to be addressed. First and foremost, will LTE UEs with no or only limited modifications continue to operate as usual? This is not a priori clear since DSR modifies the channel structure of LTE. Secondly, what is the impact of LTE small cells on the performance, in particular outage, of those GSM users in the boundary of the small cells? Last but not least, how good is the performance e.g. throughput of the LTE small cells? We address these critical issues in this paper. While the paper is focused on LTE small cells overlay over GSM, the concept and analysis are generally applicable for deployment of any wideband OFDM system overlay over a narrower band system.


\section{LTE Small Cells in GSM Spectrum}
\label{sec:lte}

In this section, we provide detailed description of the system design and argue the feasibility of DSR, i.e., LTE with suppression of some PRBs which is subsequently referred to as LTE with puncturing. While small cells use LTE with puncturing, two options are available for the operation of the GSM macro cells. The first option is to maintain the same amount of spectrum for GSM but to adopt intelligent channel assignment in GSM, i.e., allocate channels corresponding to the suppressed PRBs to those GSM users located in the coverage of small cells  in that sector. The other option is to reduce the amount of spectrum for GSM by scheduling all the GSM signals only on the LTE PRBs suppressed in that sector. The second option is conceptually simpler and has the advantage that GSM signals are (nearly) free of LTE interference but with a cost of reduced GSM capacity. In the sequel, we focus on the first option noting that the concept and analysis is also applicable to the other option. Also, to make the description concrete, we focus on the downlink system. Uplink system can be treated along similar lines.

\subsection{DSR: From Theory to Practice}

DSR solution reserves certain PRBs within the LTE transmission bandwidth for GSM signals and does not transmit any LTE signal on those sub-carriers. This reservation scheme changes the channel structure of LTE. To ensure normal operation of LTE, we need to carefully select the positions of those reserved PRBs out of the LTE channel. We take the $10$MHz LTE channel as an example and discuss the potential impact of LTE puncturing on the broadcast and synchronization channels, reference symbols, and other control signaling channels including Physical Control Format Indicator Channel (PCFICH), Physical Downlink Control Channel (PDCCH) and Physical Hybrid ARQ Indicator Channel (PHICH) \cite{3gppMobility}.

As argued in \cite{lin2013dynamic},  we can avoid choosing the PRBs used by the LTE broadcast and synchronization channels, and puncturing reference symbols located within the reserved PRBs is not an issue. However, the positions of PCFICH and PHICH are not fixed and vary with the physical cell ID \cite{3gppMobility}. They would spread out and occupy all the PRBs  if all the cell IDs were used and we would not be able to reserve PRBs without affecting them. To overcome this issue, we propose to use only a small subset of the cell IDs. Then PCFICH and PHICH will only occupy certain part of the LTE transmission bandwidth, which can be avoided by the reserved PRBs. Meanwhile, since the small cells are often clustered in the areas of high population density and different clusters are often geographically separated from each other, a small number of cell IDs seems to be enough for practical use in each cluster and the same set of cell IDs can be reused from one cluster to another.

The real challenge comes from the PDCCH which in any case occupies all the PRBs. This implies that a certain number of resource element groups have to be wiped out with LTE puncturing. Fortunately, LTE multiplexes and interleaves several PDCCHs within one LTE subframe. This helps to spread the impact of puncturing over all the PDCCHs and each PDCCH only needs to tolerate a certain level of the errors. Moreover, LTE allows the system to increase the aggregation level of the control channel elements, which can make PDCCHs more robust against errors.  Note that given limited resource of the control channel elements, increasing its aggregation level reduces the number of PDCCHs that can be simultaneously used. However, we believe that this is not a critical issue in small cells which have limited coverage and need to serve only a small number of UEs. The link level simulation for studying the impact of puncturing on PDCCH can be found in \cite{lin2013dynamic}, which shows that the PDCCH block error rate (BLER) increases due to the puncturing. However, the loss in BLER can be recovered by increasing the aggregation level of the control channel elements by one.


To sum up, though DSR inevitably affects the channel structure of LTE, it appears feasible with a careful design.

\subsection{Deployment of LTE Small Cells}

In the previous subsection, we argued the feasibility of LTE operation despite puncturing a few PRBs. However, to ensure successful co-existence of the small cells with the GSM macro cells, we still need to address the following key issues.

\subsubsection{How to set the transmit power of the small cells} Though GSM users located within the coverage of small cells are free of LTE interference by scheduling their signals on the punctured PRBs, those GSM users located close to the small cells may still suffer from severe interference from the small cells. So we need to carefully set the transmit power of the small cells to avoid excessive interference to these nearby GSM users. In this paper we propose to use the concept of guard region around each small cell (cf. Fig. \ref{fig:model2}).  The guard region encompasses the coverage area of the small cell and extends further out. The macro cell schedules GSM users located within the guard region of the small cell on the reserved PRBs. The transmit power of the small cell is then configured such that the average received Signal to Interference plus Noise Ratio (SINR) degradation of those GSM users located at the border of the guard region should be less than e.g. 1 dB. 

Obviously, the size of the guard region is closely related to the number of reserved PRBs and the traffic distribution over the space. For example, assuming a uniform spatial traffic distribution, the number of reserved PRBs should be proportional to the size of the guard region: The larger the guard region, the more PRBs we need to reserve. Given the size of the guard region, the transmit power of the small cells can be determined, which in turn determines the coverage area of the small cells.

\subsubsection{How to puncture the GSM spectrum in each sector}
Note that GSM adopts frequency reuse, implying that LTE small cells located in different GSM macro sectors have to puncture different parts of the transmission bandwidth. To make the discussion specific, we assume a particular type of frequency planning for the macro cells using 10MHz spectrum, as shown in Fig. \ref{fig:gprsBW}. There are 48 carriers with 200KHz bandwidth each while the remaining 400KHz spectrum is used as guard band. The example frequency planning shown assumes 1/3 frequency reuse for the traffic carrier while 3/9 frequency reuse is assumed for the control carrier. Each GSM sector uses $13$ traffic carriers as well as $1$ control carrier.  LTE small cells use the whole $10$MHz spectrum as its transmission bandwidth. Suppose the guard region of the small cell in each sector is about $30\%$ as large as the coverage of the associated sector, the small cell in that sector can reserve 1MHz of the spectrum allocated to that sector, as shown in Fig. \ref{fig:gprsBW}. Note that the puncturing  in Fig. \ref{fig:gprsBW} is just an illustrative example. The real puncturing should also satisfy the general rules described in the previous section to make the LTE system work.

\begin{figure}
\centering
\includegraphics[width=8.5cm]{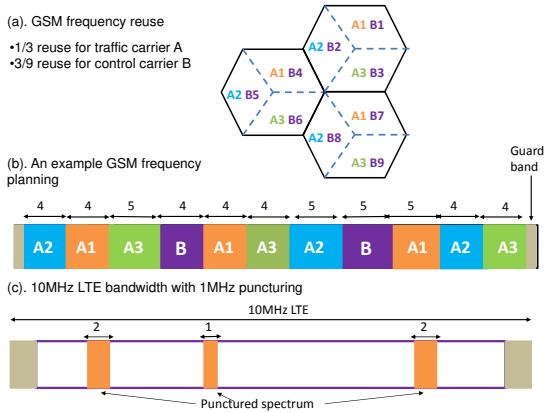}
\caption{An example of frequency planning for the GSM networks and LTE with puncturing.}
\label{fig:gprsBW}
\end{figure}

\section{System Model}
\label{sec:sys}

In this section, we describe a specific system model to study the throughput of LTE small cells and the outage performance of GSM networks. We consider a two-tier cellular network, where  LTE small cells are overlaid within GSM macro cells, as shown in Fig. \ref{fig:model2}. We adopt the frequency planning  shown in Fig. \ref{fig:gprsBW} for the GSM macro cells.  The side length of the hexagonal cell of the macro GSM BS is denoted by $R_m$. The intended GSM sector is overlaid by one LTE small cell, which is at the position $(D,\theta)$ w.r.t. its associated GSM BS. The coverage radius and guard region radius of the LTE small cell are denoted by $R_c$ and $R_s$, respectively. The macro BSs are equipped with directional antennas while LTE small BSs use omni-directional antennas to minimize complexity. The directional antennas are assumed to have the radiation pattern $G(\phi)$.

\begin{figure}
\centering
\includegraphics[width=8cm]{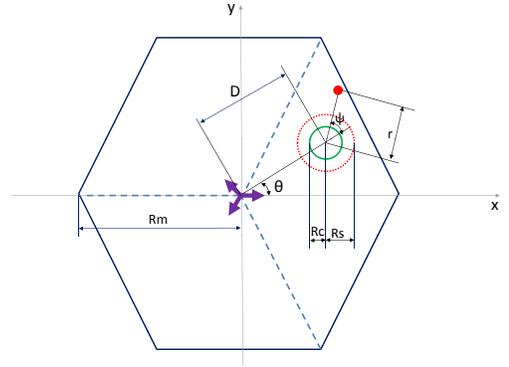}
\caption{System model: Green ball denotes the coverage area of the LTE small cell while red ball denotes the guard region.}
\label{fig:model2}
\end{figure}

We denote by $P_g$ and $P_l$ the transmit power of the GSM BSs and the LTE BSs, respectively. As described in Section \ref{sec:lte}, $P_l$ is configured such that the performance degradation of those GSM users located at the distance $R_s$ w.r.t. the LTE BS should be less than 1 dB. As for the GSM users located nearer than $R_s$ around the LTE BS, their signals are scheduled on the GSM channels corresponding to punctured PRBs.

The above description focuses on the hexagonal model for the macro cells. For analytical tractability, we further adopt a spatial random process based model. In particular, we assume the GSM BSs positions follow a Poisson point process (PPP) $\Phi$ with density $\lambda = \frac{2}{3\sqrt{3}R_m^2}$. This Poisson assumption on the BS positions has been used in e.g. \cite{brown2000cellular, andrews2010tractable, lin2012towards}.

\section{Analytical Results}
\label{sec:ana}

\subsection{LTE Throughput Calculation}

In this subsection, we study the achievable rates of LTE UEs in the small cells. To this end, we first notice that, due to the sectorization, a particular PRB  depending on its position within the spectrum experiences one of the 3 different types of GSM interference. We group the PRBs into three bands based on the experienced GSM interference. The bandwidth of these bands are denoted by $B_1 = 2$MHz, $B_2=B_3 = 3$MHz. That is, we puncture a $1$MHz portion in the LTE transmission bandwidth and band 1 corresponds to the GSM channels used in the sector on which the LTE small cell overlays, while bands 2 and 3 correspond to the GSM channels used by the other two neighboring sectors.

Then the band $i$'s SINR of the UE with radio link length $r$ is given by
\begin{align}
\sinr_i (r) = \frac{ S_l(r) }{ I_i + \sigma^2 }.
\end{align}
Here $\sigma^2$ is the noise variance, $S_l$ is the desired LTE signal power given by
\begin{align}
S_l(r) = \hat{P}_l H_l l(r) = \hat{P}_l H_l r^{-\alpha},
\end{align}
where $\hat{P}_l = P_l L_0$ with $L_0$ being the constant that gives the path loss when $r = 1$, $l(r) = r^{-\alpha}$ denotes the deterministic path loss law, and $H_l$ is a random variable capturing the fading effect of the LTE radio link. For simplicity, we ignore the shadowing effect and consider Rayleigh fading only, i.e., $H_l \sim \textrm{Exp}(1)$. This is reasonable since the small cell has small coverage and thus the shadowing effect is not that serious. Assuming w.l.o.g. the LTE UE is located at the origin, the GSM interference $I_i$ of type $i$  is given by
\begin{align}
I_i = \sum_{ x \in \Phi \slash B(o,R_i) } \hat{P}_g F_x  l( \| x \| ) G( \theta_i - \angle x ), i=1,2,3,
\end{align}
where $\hat{P}_g = P_g L_0$, $F_x$ is the random variable capturing the fading effect (which may include the effect of both shadowing and Rayleigh fading) from macro BS $x$ to the LTE UE,\footnote{We use $x$ to denote both the position of an arbitrary GSM BS and its index. The meaning should be clear from the context.} and $\theta_i - \angle x$ is the antenna azimuth with  $\theta_i = \frac{2\pi}{3} \cdot (i-1)$ being the antenna angular offset of the sector $i$ in the cell $x$, and $B(o,R_i)$, the ball of radius $R_i$ cetered at the origin, denotes the guard zone in which no macro BSs can occupy. This guard zone is used to capture the fact that the closest GSM interferer is at least at some distance $R_i$ away from the UE. 
Under our model, we choose $R_i$ as follows:
\begin{align}
R_1 = D - R_c, R_2 = R_3 = \sqrt{ ( \frac{3}{2}R_m - D + R_c )^2 + \frac{3}{4}R_m^2 }. \notag
\end{align}

Note in characterizing the UE's SINR, we assume GSM signals are the sole source of interference contributing to $I_i$ and the inter-cell interference of LTE small cells is ignored. This is reasonable because the overlaid LTE small cells operate at very low power and are typically well geographically separated from each other. In addition, we ignore the leakage of GSM transmission in the suppressed LTE PRBs into the desired LTE signal due to both adjacent channel leakage and in-channel selectivity, since the leakage is small compared to the interference captured in $I_i$ \cite{lin2013novel,lin2013dynamic}. 

After characterizing the SINRs, the UE's rate can be computed as
\begin{align}
\tau (r) &\triangleq 
&= \frac{1}{N_{ue}} \sum_{i=1}^3 \b E \left[ B_i \log ( 1 + \frac{1}{\eta} \sinr_i(r) ) \right],
\end{align}
where $N_{ue}$ denotes the expected number of UEs in the small cell and $\eta$ denotes the Shannon gap. To be specific, we assume UEs have density $\lambda^{(u)}$ and are uniformly and independently distributed within the coverage of the small cell. Here we assume a round robin scheduling scheme for the LTE small cell and each UE gets a fraction $\frac{1}{N_{ue}}$ of the total transmission resource. However, the proportional fair scheduling is more popular in current cellular networks and will be used in the simulation in this paper too.

We next give the explicit analytical expression for $\tau (r)$ in Theorem \ref{thm:1}.
\begin{thm}
The rate $\tau (r)$ of the LTE UE with link length $r$ is given by
\begin{align}
&\tau (r) = \frac{1}{\lambda^{(u)} \pi R_c^2} \cdot \notag \\
& \sum_{i=1}^3 B_i \int_0^\infty  e^{ -  \snr_l^{-1} r^\alpha \eta(e^t-1) }  \c L_{I_i} \left( \eta(e^t-1)r^\alpha \hat{P}^{-1}_l  \right)   dt,
\end{align}
where $\snr_l = \hat{P}_l/\sigma^2$ and
\begin{align}
\c L_{I_i} (s) = e^{ \lambda \pi R_i^2 +  \lambda \int_{0}^{2\pi} \frac{ s^{ \frac{2}{\alpha} } }{\alpha} \b E_{Z} \left[   Z^{ \frac{2}{\alpha} }(\phi) \cdot  \gamma ( -\frac{2}{\alpha}, sZ(\phi) R_i^{ - \alpha } ) \right] \ d\phi   }
\label{eq:thm:1}
\end{align}
with $Z(\phi) = \hat{P}_g   F G( \phi )$ and $\gamma(s,x) = \int_{0}^{x} t^{s-1} e^{-t} dt $ is the lower incomplete Gamma function.
\label{thm:1}
\end{thm}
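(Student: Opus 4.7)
The plan is to establish the identity in three stages: reduce the rate formula to an integral of SINR tail probabilities, exploit the Rayleigh fading on the desired link to express those tails via the Laplace transform $\c L_{I_i}$, and finally evaluate $\c L_{I_i}$ with the Poisson Laplace functional.

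First, since UEs form an independent uniform pattern of density $\lambda^{(u)}$ on a disc of radius $R_c$, the expected number served is $N_{ue}=\lambda^{(u)}\pi R_c^2$, which yields the prefactor. For each band $i$ I would use the standard identity $\log(1+X)=\int_0^\infty \mathbf{1}\{X>e^t-1\}\,dt$ and Fubini to write
\begin{align*}
\b E\!\left[\log\!\left(1+\tfrac{1}{\eta}\sinr_i(r)\right)\right] = \int_0^\infty \Pr\!\left(\sinr_i(r)>\eta(e^t-1)\right)\,dt.
\end{align*}
Conditioning on $I_i$ and using $H_l\sim\mathrm{Exp}(1)$ from the model,
\begin{align*}
\Pr(\sinr_i(r)>s)
&= \b E\!\left[e^{-s r^\alpha(I_i+\sigma^2)/\hat{P}_l}\right] \\
&= e^{-s r^\alpha/\snr_l}\,\c L_{I_i}\!\left(s r^\alpha/\hat{P}_l\right).
\end{align*}
Substituting $s=\eta(e^t-1)$ reproduces exactly the integrand that appears in the theorem, so everything reduces to computing $\c L_{I_i}$.

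For the Laplace transform I would invoke the Poisson Laplace functional: since $\Phi$ restricted to $\mathbb{R}^2\setminus B(o,R_i)$ is still a PPP of intensity $\lambda$,
\begin{align*}
\c L_{I_i}(s) = \exp\!\left(-\lambda\!\int_0^{2\pi}\!\!\int_{R_i}^{\infty}\!\!\b E_F\!\left[1-e^{-sZ(\phi)r^{-\alpha}}\right] r\,dr\,d\phi\right),
\end{align*}
after passing to polar coordinates and absorbing the fixed antenna offset $\theta_i$ into the azimuth variable (legitimate because we integrate over one full period). The radial substitution $u=sZr^{-\alpha}$ turns the inner integral into $\tfrac{(sZ)^{2/\alpha}}{\alpha}\int_0^{sZR_i^{-\alpha}}(1-e^{-u})u^{-2/\alpha-1}\,du$; a single integration by parts produces a $\gamma(1-2/\alpha,\cdot)$ form plus a boundary piece, after which the functional equation $\gamma(s+1,x)=s\gamma(s,x)-x^s e^{-x}$ recasts everything as $-\tfrac{R_i^2}{2}-\tfrac{(sZ)^{2/\alpha}}{\alpha}\gamma(-2/\alpha,sZR_i^{-\alpha})$. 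Taking the expectation over $F$ and exponentiating delivers the $\lambda\pi R_i^2$ term together with the incomplete-gamma integral in the theorem statement.

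The main obstacle will be the incomplete-gamma bookkeeping. The natural integration by parts yields $\gamma(1-2/\alpha,\cdot)$ accompanied by an $F$-dependent boundary term proportional to $R_i^2 e^{-sZR_i^{-\alpha}}$, and one must apply the $\gamma$-recurrence and verify that the two fading-dependent boundary contributions cancel exactly, leaving only the clean deterministic offset $\lambda\pi R_i^2$ inside the exponent. Once this cancellation is confirmed, the remaining steps amount to routine bookkeeping of Poisson Laplace-functional calculations.
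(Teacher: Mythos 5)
Your proposal is correct and follows essentially the same route as the paper's proof: the tail-probability representation of $\b E[\log(1+\cdot)]$, the exponential tail of $H_l\sim\mathrm{Exp}(1)$ giving $e^{-s^\star\sigma^2}\c L_{I_i}(s^\star)$ with $s^\star=\eta(e^t-1)r^\alpha/\hat{P}_l$, and the Poisson probability generating functional evaluated in polar coordinates with the substitution $u=sZv^{-\alpha}$. The only difference is one of rigor rather than route: where the paper formally splits $1-\c L_F$ into separately divergent pieces and writes $\gamma(-\tfrac{2}{\alpha},\cdot)$ directly (implicitly via analytic continuation), you keep $1-e^{-u}$ together, integrate by parts to reach $\gamma(1-\tfrac{2}{\alpha},\cdot)$, and use the recurrence $\gamma(s+1,x)=s\gamma(s,x)-x^{s}e^{-x}$ to exhibit the cancellation of the fading-dependent boundary terms, which is a slightly more careful justification of the same identity.
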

\begin{proof}
See Appendix \ref{proof:thm1}.
\end{proof}

\subsection{GSM Outage Calculation}

In this subsection, we study the impact of the overlaid LTE small cells on the performance of the GSM networks, particularly outage probability. With 1/3 frequency reuse, power control is needed for traffic channels to maintain acceptable outage performance. Since our model does not incorporate power control, we consider the outage probability of the 3/9-reuse control channels without power control. In addition, when evaluating the SINR of a GSM user, we assume the cross-tier interference only comes from the nearest LTE small cell because it is the dominant one and other LTE small cells are relatively far away and operate at very low power and thus their interference can be ignored.

Assume the GSM user is located at $(r,\psi)$ w.r.t. the BS in the small cell, as shown in Fig. \ref{fig:model2}. Following the same line of the analysis as in the previous subsection, the SINR  of the GSM user is given by
\begin{align}
\sinr_g (r,\psi) =  \frac{ S_g (r,\psi) }{ I_{0} + I_{l} + \sigma^2 },
\label{eq:gsinr}
\end{align}
where $S_g (r,\psi) = \hat{P}_g F_g G(\beta(r,\psi)) d^{-\alpha}(r,\psi)$ is the desired GSM signal, $
I_l = \hat{P}_l F_l r^{-\alpha}
$ is the LTE interference, and $I_0$ is the co-channel inter-cell GSM interference. For simplicity, we ignore the shadowing effect of the desired GSM signal, i.e., $F_g \sim Exp(1)$.


We next define the outage probability of the GSM user located at $(r,\psi)$  as
\begin{align}
P_{out} (r,\psi) = P( \sinr_g (r,\psi) < T ),
\end{align}
where $T$ is the SINR threshold for reliable GSM transmissions.  The explicit analytical expression for $P_{out} (r,\psi)$ is given in Theorem \ref{thm:2}.

\begin{thm}
The outage probability of the GSM user located at $(r,\psi)$ w.r.t. the small cell at $(D,\theta)$ is given by
\begin{align}
P_{out} ( r, \psi ) = 1 - e^{ - s^\star \sigma^2 } \cdot \c L_{I_0} (s^\star) \cdot  \c L_{I_l} (s^\star).
\end{align}
Here  $s^\star = \frac{T d^{\alpha} (r,\psi)}{\hat{P}_g G(\beta(r,\psi)) }$, and
\begin{align}
d(r,\psi) = \sqrt{ D^2 + r^2 - 2rD \cos ( \pi - \psi ) },
\end{align}
and
\[ \beta(r,\psi) = \left \{ \begin{array}{ll}
\theta + \tilde{\beta} (r,\psi) & \mbox{if $0\leq \psi < \pi $}; \\
\theta - \tilde{\beta} (r,\psi) & \mbox{if $\pi \leq \psi \leq 2\pi$}; \end{array} \right. \]
where
\begin{align}
\tilde{\beta} (r,\psi) = \arccos \frac{ D^2 + d^2(r,\psi) - r^2 }{ 2 D d(r,\psi) }.
\end{align}
$\c L_{I_0} (s)$ is given in (\ref{eq:thm:1}) with $\lambda$ replaced by $\lambda/3$ and
\begin{align}
R_0 = \sqrt{ 9R_m^2 + D^2 - 6 D R_m \cos( \frac{2\pi}{3} - | \theta | )  },
\end{align}
and $\c L_{I_l} (s)$ is the Laplace transform of $I_l$. In particular, if $F_l \sim Exp(1)$,
\begin{align}
\c L_{I_l} (s) =   \frac{1}{1+ s  \hat{P}_l r^{-\alpha}}.
\end{align}
\label{thm:2}
\end{thm}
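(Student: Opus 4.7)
The plan is to exploit the exponential distribution of the desired-signal fading $F_g$ to convert the outage probability into a joint Laplace transform of the interference-plus-noise, then use independence of the GSM interference $I_0$ and the LTE interference $I_l$ to factor it into the product stated in the theorem. Concretely, starting from
\begin{align*}
P_{out}(r,\psi) = \b P\!\left( F_g < \frac{T d^{\alpha}(r,\psi)}{\hat P_g G(\beta(r,\psi))}\bigl(I_0 + I_l + \sigma^2\bigr) \right),
\end{align*}
I would condition on $(I_0,I_l)$ and use $\b P(F_g < x) = 1 - e^{-x}$ for $F_g \sim \mathrm{Exp}(1)$. Identifying the prefactor as $s^\star$ and taking the expectation over $(I_0,I_l)$ yields $1 - P_{out} = e^{-s^\star\sigma^2}\,\b E[e^{-s^\star I_0}]\,\b E[e^{-s^\star I_l}]$, where independence of the GSM point process and the LTE small-cell fading is used in the factorization.

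Next, I would compute each Laplace transform separately. For $\c L_{I_l}(s)$, since $I_l = \hat P_l F_l r^{-\alpha}$ involves only the single dominant LTE interferer, a one-line computation gives $\c L_{I_l}(s) = 1/(1 + s\hat P_l r^{-\alpha})$ under Rayleigh fading. For $\c L_{I_0}(s)$, I would invoke exactly the PGFL-based derivation used in the proof of Theorem~\ref{thm:1}, since $I_0$ has the same structural form $\sum_{x\in \Phi' \setminus B(o,R_0)} \hat P_g F_x G(\cdot - \angle x) \|x\|^{-\alpha}$. The only two modifications are: (i) the density drops from $\lambda$ to $\lambda/3$ because only one of the three reused sub-bands produces co-channel interference under $3/9$ reuse; and (ii) the exclusion radius $R_0$ must be chosen as the distance from the GSM user to the nearest possible co-channel macro BS.

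The geometric identities follow from elementary trigonometry applied to Fig.~\ref{fig:model2}. Placing the small cell at $(D,\theta)$ relative to its macro BS and the GSM user at $(r,\psi)$ relative to the small cell, the law of cosines gives $d(r,\psi) = \sqrt{D^2 + r^2 - 2rD\cos(\pi-\psi)}$, and applying the law of cosines a second time in the triangle (macro BS, small cell, user) produces $\tilde\beta(r,\psi) = \arccos\frac{D^2 + d^2 - r^2}{2Dd}$; the two cases in the definition of $\beta$ correspond to whether the user lies above or below the axis joining the macro BS and the small cell, so the offset to the boresight $\theta$ is added or subtracted accordingly. Finally, $R_0$ is derived by locating the nearest cell of the 3/9 co-channel tier (at distance $3R_m$ from the serving macro BS, offset by the sector azimuth $\tfrac{2\pi}{3}$) and applying the law of cosines between that macro BS and the user, which yields the stated expression.

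The main obstacle is the geometry rather than the probabilistic step: one must be careful that $\beta$ is measured relative to the correct antenna boresight of the \emph{serving} macro sector (not the interfering ones), and that $R_0$ correctly represents an \emph{exclusion} zone that under-estimates interference in a way that is consistent with the PPP-plus-guard-zone approximation inherited from Theorem~\ref{thm:1}. Once these are set, the proof reduces to substituting $\lambda \to \lambda/3$ and $R_i \to R_0$ in the Laplace-transform formula \eqref{eq:thm:1} and collecting terms.
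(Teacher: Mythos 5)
Your proposal matches the paper's own proof essentially step for step: exploit the Exp(1) fading of the desired GSM signal to write the coverage probability as $e^{-s^\star\sigma^2}\,\b E[e^{-s^\star I_0}]\,\b E[e^{-s^\star I_l}]$ via conditioning and independence, reuse the PGFL-based Laplace transform from Theorem~\ref{thm:1} with $\lambda\to\lambda/3$ and exclusion radius $R_0$, and compute $\c L_{I_l}$ directly for the single dominant LTE interferer. Your geometric justifications (law of cosines for $d$, $\tilde\beta$, and $R_0$ with the nearest co-channel macro at distance $3R_m$) simply spell out what the paper dismisses as ``simple geometry analysis,'' so the argument is correct and equivalent.
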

\begin{proof}
See Appendix \ref{proof:thm2}.
\end{proof}

With Theorem \ref{thm:2}, we can readily evaluate the outage performance of the original GSM network, where small cells are not deployed, as in the following corollary.
\begin{cor}
The outage probability of the GSM user located at $(r,\psi)$ w.r.t. the position $(D,\theta)$ but without the overlaid small cells is given by
\begin{align}
\hat{P}_{out} ( r, \psi ) = 1 - e^{ - s^\star \sigma^2 } \cdot \c L_{I_0} (s^\star),
\end{align}
where the notations are the same as in Theorem \ref{thm:2}.
\label{cor:1}
\end{cor}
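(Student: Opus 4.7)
The plan is to observe that Corollary~\ref{cor:1} is a direct specialization of Theorem~\ref{thm:2}: removing the overlaid LTE small cell is equivalent to setting $\hat{P}_l = 0$ in the interference expression (\ref{eq:gsinr}), which kills the term $I_l$. Thus the SINR of the GSM user at $(r,\psi)$ collapses to $\sinr_g(r,\psi) = S_g(r,\psi) / (I_0 + \sigma^2)$, and the outage event reduces to $\{ S_g(r,\psi) < T(I_0 + \sigma^2) \}$.

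First I would expand $S_g(r,\psi) = \hat{P}_g F_g G(\beta(r,\psi)) d^{-\alpha}(r,\psi)$ and rewrite the outage event as $\{ F_g < s^\star (I_0 + \sigma^2) \}$, with $s^\star$ defined exactly as in Theorem~\ref{thm:2}. Invoking the assumption $F_g \sim \textrm{Exp}(1)$ and conditioning on $I_0$, I would then obtain
\begin{align}
\hat{P}_{out}(r,\psi) &= 1 - \b E \bigl[ e^{ - s^\star ( I_0 + \sigma^2 ) } \bigr] \notag \\
&= 1 - e^{ - s^\star \sigma^2 } \, \c L_{I_0}(s^\star),
\end{align}
where the deterministic noise factor is pulled out and the expectation over $I_0$ is the very definition of its Laplace transform.

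The remaining ingredient, namely the closed-form expression for $\c L_{I_0}(s^\star)$ with $\lambda$ replaced by $\lambda/3$ and exclusion radius $R_0$, has already been derived inside the proof of Theorem~\ref{thm:2}, since the statistical structure of $I_0$ does not depend on whether the small cell is present. The only point worth checking, rather than a genuine obstacle, is that removing the LTE small cell does not alter the geometry of the co-channel GSM interferers contributing to $I_0$; this is immediate because $I_0$ is a functional of the macro BS process $\Phi$, its $3/9$-reuse thinning, and the sector antenna pattern $G(\cdot)$, all of which are unaffected by the presence or absence of the LTE overlay. No additional steps are required.
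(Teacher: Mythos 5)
Your proposal is correct and matches the paper's intent exactly: the paper presents Corollary~\ref{cor:1} as an immediate specialization of Theorem~\ref{thm:2} obtained by dropping the LTE interference term $I_l$ (equivalently the factor $\c L_{I_l}(s^\star)$), with the expression for $\c L_{I_0}$ carried over unchanged since the macro GSM interference statistics are unaffected by the overlay. Your conditioning on $I_0$ and use of $F_g \sim \textrm{Exp}(1)$ is precisely the argument underlying the paper's Theorem~\ref{thm:2} proof, so no further comment is needed.
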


With Theorem \ref{thm:2} and Corollary \ref{cor:1}, the average outage probabilities of the GSM users located within the ball with radius $R_s + \Delta R_s$ centered at $(D,\theta)$ in the three scenarios: No LTE overlay, Direct LTE overlay (without DSR), LTE overlay with DSR, are respectively given by
\begin{align}
P^{\textrm{w/o LTE}}_{out} = &G \int_{0}^{2\pi} \int_{0}^{R_s + \Delta R_s}  \hat{P}_{out} ( r, \psi )  r dr d \psi , \\
P^{\textrm{w/o DSR}}_{out} = &G \int_{0}^{2\pi} \int_{0}^{R_s + \Delta R_s}  P_{out} ( r, \psi )  r dr d \psi  ,\\
P^{\textrm{DSR}}_{out} = &G \int_{0}^{2\pi}   \int_{0}^{R_s}  \hat{P}_{out} ( r, \psi )  r dr  d \psi \notag \\ &+ G \int_{0}^{2\pi} \int_{R_s}^{R_s + \Delta R_s}  P_{out} ( r, \psi )  r dr  d \psi.
\end{align}
where $G^{-1} =   \pi (R_s + \Delta R_s)^2$. We will numerically compare these outage probabilities in Section \ref{sec:num}.

\section{Numerical and Simulation Results}
\label{sec:num}

In this section we provide some numerical and simulation results. The network layout used in the simulation consists of regular hexagonal cells. The center cell is treated as the typical one, while the remaining cells act as interfering cells. The radiation pattern used for the directional antennas of GSM BSs is based on the one suggested by 3GPP \cite{3gppMimoSim}:
\begin{align}
G(\phi) =  - \min \left( 12 \left(\frac{\phi}{70}\right)^2, 20  \right) \textrm{ dB }, -180 \leq \phi \leq 180. \notag
\label{eq:2}
\end{align}
The channel model used accounts for path loss and Rayleigh fading. For fair comparison with the analytical results, we ignore the shadowing. Besides, proportional fair scheduling is adopted in the simulation. The main simulation and/or numerical parameters used are summarized in Table \ref{tab:sys:para} unless otherwise specified.

\begin{table}
\centering
\begin{tabular}{|l||r|} \hline
$\#$ of hexagonal cells & $6 \times 6$  \\ \hline
Length of cell edge $R_m$  & $1000$m  \\ \hline
Coverage of the small cell $R_c$   & $50$m  \\ \hline
Location of the small cell $(D,\theta)$   & $(0.5R_m, 0)$  \\ \hline
Guard region of the small cell $R_s$   & $\pi R_s^2 =\frac{1}{4} \times \frac{\sqrt{3}}{2} R_m^2 $  \\ \hline \hline
GSM BS power & $40$W \\ \hline
Noise PSD & $-174$dBm \\ \hline
Wavelength $\lambda$ & 0.375m \\ \hline
Path loss exponent $\alpha$ & 3 \\ \hline
LTE transmission bandwidth & $10$MHz \\ \hline
$(B_1,B_2,B_3)$ & $(2,3,3)$MHz \\ \hline \hline
GSM carrier bandwidth & $200$KHz \\ \hline
GSM frequency reuse factor & $1/3$ \\ \hline
GSM antenna radiation $G(\phi)$ &  In Eq. (\ref{eq:2}) \\ \hline \hline
Gap $\eta$ & $3$dB \\ \hline
\end{tabular}
\caption{Simulation and/or Numerical Parameters}
\label{tab:sys:para}
\end{table}

\subsection{LTE Throughput}

We first show the mean LTE throughput of the small cell versus the position $(D,\theta)$ of the small cell in Fig. \ref{fig:meanR}. The results show that the LTE small cell can provide quite high throughput to UEs within its coverage. This demonstrates the potential of LTE small cells to provide high speed data services in existing GSM networks.
\begin{figure}
\centering
\includegraphics[width=8cm]{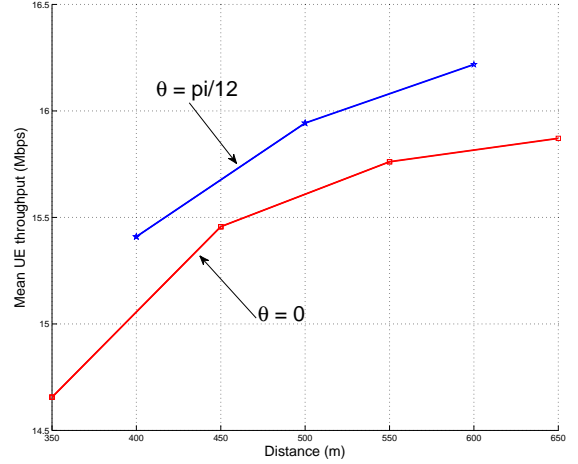}
\caption{Mean LTE throughput of the small cell: Here we only simulate $(D,\theta)$ such that the guard region of the small cell is fully contained in the corresponding GSM macro cell (cf. Fig. \ref{fig:model2}).}
\label{fig:meanR}
\end{figure}

We next compare the UEs' rates obtained by simulation to those computed from analytical result. The results are shown in Table \ref{tab:mrate}. We can see that the analytical approach underestimates the achievable rate by no more than $15\%$. This gap mainly arises from two factors. The first one is the difference between the hexagonal model used in the simulation and the PPP model used in the analysis.  It is known that the irregular Poisson-Voronoi tessellation leads to conservative results while hexagonal tessellation results in optimistic results \cite{andrews2010tractable}.  The second factor is the difference between the proportional fair scheduling used in the simulation and the round robin scheduling used in the analysis. Proportional fair scheduling exploits the channel conditions while round robin scheduling is channel-independent. Thus, it is natural to expect higher rate by using proportional fair scheduler.
\begin{table}
\centering
    \begin{tabular}{ | l | r | r | c | r | }
    \hline
     UE Position    & Simulation & Analysis & Gap & Gap/Sim \\ \hline
    (-22.1,	4.6)    & 17.53 & 14.91  & 2.62 & 0.15 \\ \hline
    (17.8,	25.7)   & 15.25 & 12.97   & 2.28 & 0.15 \\ \hline
    (-7.8,	41.5)   & 12.32 & 11.20   & 1.12 & 0.09 \\ \hline
    (30.0,	-35.8)  & 12.10 & 10.61   & 1.49 & 0.12 \\ \hline
    \end{tabular}
        \caption{UE's rate in the LTE small cell positioned at $D=350m,\theta=0$: Simulation versus analysis}
       \label{tab:mrate}
\end{table}

Though being a bit conservative, the analytical result, however, is very useful in exploring how the rate of the LTE small cell depends on other system  parameters, which may be too time-consuming to explore via simulation. For example, we numerically compute the LTE UEs' rates versus the size of the GSM macro cell and show the results in Fig. \ref{fig:macroSize}. The positions of those UEs are given in Table \ref{tab:mrate}. We observe an almost linear growth in the UEs' rates as $R_m$ increases. This growth arises from the reduced macro interference as $R_m$ increases.

 \begin{figure}
 \centering
 \includegraphics[width=8cm]{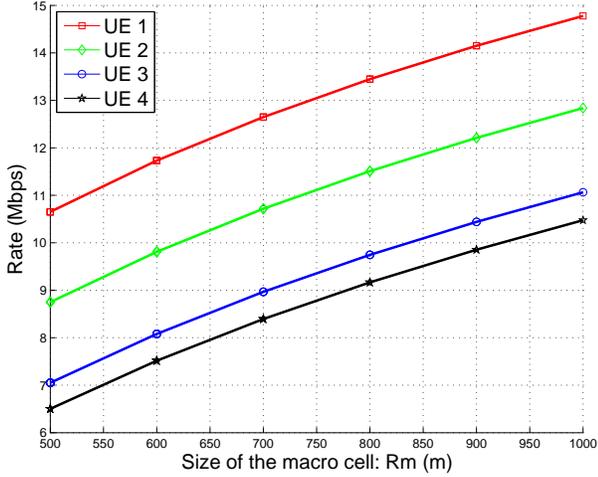}
 \caption{LTE UEs' rates versus the size of GSM macro cell}
 \label{fig:macroSize}
 \end{figure}

\subsection{GSM Outage}

We first compare the GSM outage probabilities obtained by simulation to those computed from analytical result. The results are shown in Fig. \ref{fig:outSvsA} and \ref{fig:outSvsALTE}. We can see that the gap between the analytic result and the simulation result for each user position is about 4dB at the $10\%$ outage performance. Thus, the analytical approach provides a relatively good estimate on the outage performance of GSM compared to the simulation result obtained from hexagonal model.     Moreover, the analytical results allow us to easily compare the average outage probabilities of the GSM users located within the ball with radius $R_s + \Delta R_s$ centered at $(D,\theta)$ in the three scenarios: No LTE overlay, Direct LTE overlay (without DSR), LTE overlay with DSR. The numerical results are plotted in Fig. \ref{fig:out_350_0}, which shows that a direct deployment of LTE small cells within the GSM networks severely degrades the GSM outage performance. In contrast, our proposed solution the LTE overlay with DSR only lightly degrades the GSM outage performance. Thus, our proposed scheme provides a good solution to the co-existence of LTE small cells and GSM macro cells.

 \begin{figure}
 \centering
 \includegraphics[width=8cm]{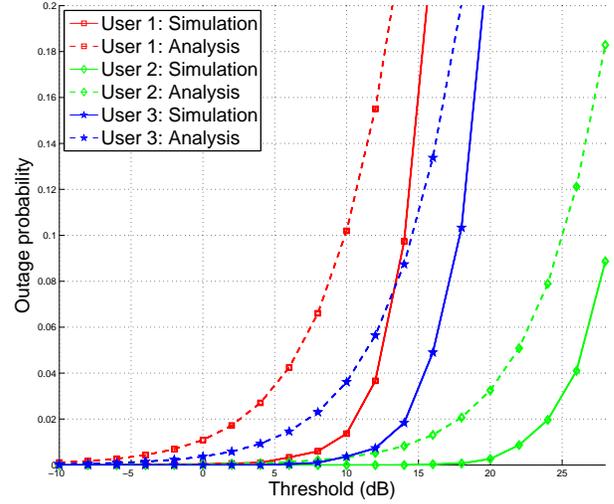}
 \caption{Outage performance of GSM without LTE overlay by simulation and analysis: The positions of the 3 GSM users w.r.t. their associated GSM BS are $(D+R_s, \theta), (\sqrt{D^2-R^2_s}, \theta+\Delta \theta ), (D-R_s, \theta)$ where $\Delta \theta= \arccos \frac{R_s}{D}$.}
 \label{fig:outSvsA}
 \end{figure}

  \begin{figure}
 \centering
 \includegraphics[width=8cm]{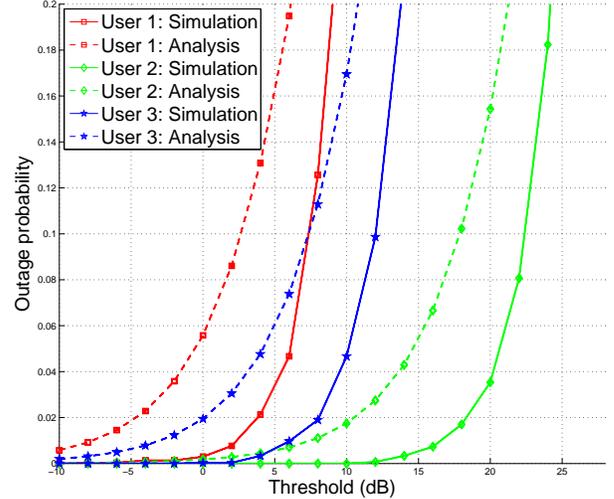}
 \caption{Outage performance of GSM with LTE overlay by simulation and analysis: The positions of the 3 GSM users are the same as those in Fig.\ref{fig:outSvsA}.}
 \label{fig:outSvsALTE}
 \end{figure}

  \begin{figure}
  \centering
  \includegraphics[width=8cm]{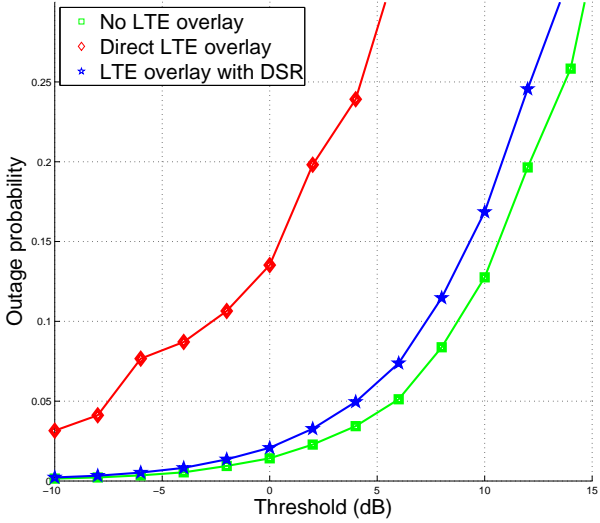}
  \caption{Comparison of the GSM outage performance in the three scenarios: No LTE overlay, Direct LTE overlay (without DSR), LTE overlay with DSR. }
  \label{fig:out_350_0}
  \end{figure}

\section {Conclusions}
\label{sec:conclusion}

In this paper we propose a novel DSR solution for deploying LTE small cells in the existing GSM networks. The system design and feasibility of DSR are carefully detailed. We then adopt a specific system model to study the throughput of LTE small cells by both analysis and simulation. The impact of the overlaid small cells on the outage performance of the GSM macro cells is also carefully investigated. Our study shows that the proposed solution can provide UEs with  high speed data services when they are covered by the small cells. Meanwhile, it allows the normal operation of the existing GSM networks.

Though the focus of this paper is about deploying LTE small cells in the existing GSM networks, 
we can also apply DSR to the overlay of LTE small cells in CDMA spectrum. For example, we can allow the LTE small cell to use the whole CDMA spectrum but reserve some PRBs, which can be used for the CDMA transmissions in the macro cell.

\appendix

\subsection{Proof of Theorem \ref{thm:1}}
\label{proof:thm1}
Note that
\begin{align}
\tau_i(r) &= \b E[ B_i \log ( 1 + \frac{1}{\eta}\sinr_i (r) ) ] \notag  \\
&=B_i \int_0^\infty  \b P(\log ( 1 + \frac{1}{\eta}\sinr_i (r) ) > t )  dt  \notag  \\
&= B_i \int_0^\infty  \b P( \sinr_i (r) > \eta (e^t - 1) )  dt, 
\end{align}
where the integrand can be derived as follows.
\begin{align}
\b P( \sinr_i (r) > T ) &= \b P ( \frac{ S(r) }{ I_i + \sigma^2 } > T )  \notag \\
&= \b P \left(    H_l   > \frac{T (I_i + \sigma^2)}{\hat{P}_l l(r)} \right)  \notag \\
&= \b E[e^{ - \frac{T (I_i + \sigma^2)}{\hat{P}_l l(r)} } ] \label{eq:proof:1} \\
&= \exp( -s^\star \sigma^2 ) \cdot  \b E[ \exp( -s^\star I_i ) ], 
\end{align}
where we use $H_l \sim Exp(1)$ in (\ref{eq:proof:1}) and $s^\star = \frac{T }{\hat{P}_l  } r^{\alpha}$.

Next we derive the Laplace transform $\c L_{I_i} (s)$.
\begin{align}
\c L_{I_i} (s)  &=\b E[ \exp( -s I_i ) ]   \notag \\
&= \b E_{ \Phi, F_x } \left[ e^{- s \sum_{ x \in \Phi \slash B(o,R_i) } \hat{P}_g F_x G( \Theta_x ) l( \| x \| )   } \right]   \notag \\
&=\b  E_{ \Phi, F_x } \left[ \prod_{x \in \Phi \slash B(o,R_i)} e^{- s \hat{P}_g F_x G( \Theta_x ) l( \| x \| )   } \right]   \notag \\
&= \b  E_{ \Phi  } \left[ \prod_{x \in \Phi \slash B(o,R_i)} \b E_{F} \left[ e^{- s \hat{P}_g F G( \Theta_x ) l( \| x \| )   } \right] \right]  \notag \\
& = e^{  -  \lambda \int_{0}^{2\pi} \int_{R_i}^\infty  v\left( 1 - \b E_{F} \left[ e^{- s \hat{P}_g F G( \phi ) l( v )   } \right] \right)    dv  d\phi    }  \label{eq:pbgf} \\
&= e^{  -  \lambda \int_{0}^{2\pi} \int_{R_i}^\infty  v\left( 1 - \c L_{F} ( s \hat{P}_g  G( \phi ) l( v ) ) \right)    dv  d\phi    }  \notag \\
&=  e^{  \lambda \pi R_i^2 +  \lambda \int_{0}^{2\pi} \int_{R_i}^\infty  v \c L_{F} ( s \hat{P}_g  G( \phi ) l( v ) )     dv  d\phi  } \notag
\end{align}
where
$
\c L_{F} (\omega) = \b E_{F} \left[ \exp(- \omega F  ) \right]
$. Denote by $Z(\phi) = \hat{P}_g  G( \phi )  F$. In (\ref{eq:pbgf}), we apply the probability generating functional of PPP (see, e.g., \cite{stoyan1995stochastic,baccelli2009stochastic}).
Substituting $l(v) = v^{-\alpha} $ yields
\begin{align}
&\int_{R_i}^\infty  v \c L_{F} ( s \hat{P}_g  G( \phi ) v^{-\alpha} )     dv  \notag \\
&= \b E_{Z} \left[ \int_{R_i}^\infty  v   \exp(- s  v^{-\alpha} Z(\phi) )      dv \right]  \notag \\
&= \frac{ s^{ \frac{2}{\alpha} } }{\alpha} \b E_{Z} \left[   Z^{ \frac{2}{\alpha} }(\phi) \cdot  \gamma ( -\frac{2}{\alpha}, sZ(\phi) R_i^{ - \alpha } ) \right], \notag 
\end{align}
where $\gamma(s,x) = \int_{0}^{x} t^{s-1} e^{-t} dt $ is the lower incomplete gamma function.

Finally, plugging $N_{ue} = \lambda^{(u)} \pi R_c^2$ and $\tau_i (r)$ yields the desired expression for $\tau (r)$.

\subsection{Proof of Theorem \ref{thm:2}}
\label{proof:thm2}

Note that $P_{out} ( d, \beta ) = 1- P_{c} ( d, \beta )$ where $P_{c} ( d, \beta )$ equals
\begin{align}
&\b  P( \frac{ S_g ( d, \beta ) }{ I_{0} + I_{l} + \sigma^2 } \geq T  ) \notag \\
&=\b  P\left(  F_l   \geq \frac{T( I_{0} + I_{l} + \sigma^2)}{\hat{P}_g G(\beta) l(d)}   \right) \notag\\
&=\b  E_{I_o,I_l}\left[ \b  P\left(  F_l   \geq \frac{T( I_{0} + I_{l} + \sigma^2)}{\hat{P}_g G(\beta) l(d)} \mid I_o,I_l  \right)    \right] \notag \\
&=\b  E_{I_o,I_l}\left[e^{ -s^\star ( I_0 + I_l + \sigma^2) } \right]  \notag \\
&= e^{-s^\star\sigma^2} \cdot \b E_{I_0} [e^{ -s^\star  I_0} ] \cdot \b  E_{I_l} [e^{ -s^\star  I_l} ], \notag
\end{align}
where $s^\star   = \frac{T}{\hat{P}_g G(\beta) l(d)}$. Note $\b E_{I_0} [e^{ -s  I_0} ]$ has been derived in the proof of Theorem \ref{thm:1}. The radius $R_0$ of the guard region follows from simple geometry analysis which yields
\begin{align}
R_0 &= \sqrt{ ( \frac{\sqrt{3}}{2} R_m - D\sin \beta )^2 + ( \frac{3}{2} R_m + D \cos \beta )^2 }.\notag
\end{align}
If $F_l \sim Exp(1)$, then
\begin{align}
\b E_{I_l} [e^{ -s  I_l} ] =  E_{F_l} [e^{ -s  \hat{P}_l  l(r) \cdot F_l} ] = \frac{1}{1+ s  \hat{P}_l r^{-\alpha}}.\notag
\end{align}
This completes the proof.

\bibliographystyle{IEEEtran}
\bibliography{IEEEabrv,Reference}

\end{document}